\let\emptyset=\varnothing
\newcommand{\ie}{\textrm{i.e., }}
\newtheorem{theorem}{Theorem}
\newtheorem{corollary}{Corollary}
\newtheorem{definition}[theorem]{Definition}
\newdefinition{example}[theorem]{Example}
\newenvironment{proof}{\emph{Proof.}}{\qed \vspace{2mm}}
\begin{document}


\title{A Second-Order Formulation of Non-Termination}

\author{Fred~Mesnard}
\ead{frederic.mesnard@univ-reunion.fr}

\author{\'Etienne~Payet}
\ead{etienne.payet@univ-reunion.fr}

\address{Universit\'e de La R\'eunion, EA2525-LIM,
  Saint-Denis de La R\'eunion, F-97490, France}

\begin{abstract}
We consider the termination/non-termination property of a class of loops.
Such loops are commonly used abstractions of real program pieces.
Second-order logic is a convenient language to express non-termination. 
Of course, such property is generally undecidable. 
However, by restricting the language to known decidable cases,
we exhibit new classes of loops, the non-termination of which is decidable. 
We present a bunch of examples.
\end{abstract}

\begin{keyword}Termination, non-termination, monadic second-order logic.
\end{keyword}

\maketitle


\section{Introduction}

%

In this paper, we recall that second-order logic is a convenient
language to express non-termination of while loops, modeled as
rules.  Such rules are commonly used abstractions of real program pieces,
see, e.g., \cite{SpotoMP10} for the Java programming language. 
Our main contribution is the definition of two new classes of rules,
the termination of which is decidable, by restricting the language
to known decidable cases, namely S1S and S2S.
We also show and illustrate how decision procedures for their weak
versions WS1S and WS2S can help proving termination/non-termination.

We organize the paper as follows.
Section \ref{sec:preliminaries} presents the main concepts we need while
Section \ref{sec:sof-nt} gives the theoretical results of the paper. 
Section \ref{sec:examples} illustrates the results by means of examples and 
Section \ref{sec:related-works-conclusion} concludes.

\section{Preliminaries}
\label{sec:preliminaries}
We give a quick description of S1S and S2S, see~\cite{Thomas90} for a more
detailed presentation.
S1S is the monadic Second-order theory of 1 Successor.
Interpretations correspond to finite or infinite words
over a given finite alphabet $\Sigma$.
Terms are constructed from the constant 0 and first-order
variables $x$, $y$, \ldots{} by application of the successor
function $+1$, which is left-associative. We abbreviate $n$
successive applications of $+1$ starting from $0$
(\ie $0+1+1+\cdots+1$) to $n$.
Atomic formul\ae{} are constructed from terms,
second-order variables $X$, $Y$, \ldots{}
and predicates of the form $P_a$ where $a\in\Sigma$.
They have the form $t=t'$, $t<t'$, $t\in X$, $P_a(t)$
where $t$ and $t'$ are terms.
Formul\ae{} are constructed from atomic formul\ae{},
the usual boolean connectives ($\lor$, $\land$, \ldots)
and quantification ($\forall$ and $\exists$) over first
and second-order variables.
First-order variables are interpreted as elements of
$\mathbb{N}$ representing positions in words and second-order
variables as subsets of $\mathbb{N}$. Constant $0$
is interpreted as the first position in a word and function
$+1$ as the next position.
The formula $P_a(t)$ is true in a word $w$ if at position
$t$ of $w$ there is character $a$.
WS1S (Weak S1S) is a restriction of S1S where
second-order variables are interpreted as \emph{finite}
sets only.

S2S is the monadic Second-order theory of 2 Successors.
Interpretations correspond to finite or infinite labelled
binary trees over a given finite alphabet $\Sigma$.
Terms and formul\ae{} are constructed as in S1S
except that constant 0 is replaced with $\varepsilon$
and the successor function $+1$ is replaced with functions
$.0$ and $.1$, which are left-associative. We abbreviate
successive applications of these functions,
for instance $x.0110$ stands for $x.0.1.1.0$, which
corresponds to $(((x.0).1).1).0$, and $0110$ stands
for $\varepsilon.0.1.1.0$.
First-order variables are interpreted as elements of
$\{0,1\}^*$ representing positions in binary trees and second-order
variables as subsets of $\{0,1\}^*$.
Constant $\varepsilon$ is interpreted as the root position
of a binary tree, $.0$ as the left successor, $.1$ as the
right successor and $<$ as the proper-prefix relation
(for instance $01 < 0110$ but $00 \not< 0110$).
WS2S (Weak S2S) is a restriction of S2S where
second-order variables are interpreted as \emph{finite}
sets only.

A \emph{rule} has the form 
$r:~\tilde{x} \rightarrow \psi(\tilde{x},\tilde{y}), \tilde{y}$
where $r$ is the identifier of the rule, 
$\psi$ is a binary relation and $\tilde{x}$ and $\tilde{y}$
are tuples of distinct first-order variables ranging over a given domain.
If  $r$ is a monadic rule of the form $x \rightarrow \psi(x,y), y$
and $\psi(x,y)$ is a monadic second-order formula of S1S (S2S)
with $x$ and $y$ as free variables, we call $r$  a
\emph{monadic} S1S (respectively, S2S) rule. 
Some examples can be found in Section~\ref{sec:examples}.
We define an operational semantics as follows. 
Starting from a concrete tuple $\tilde{x_0}$ of elements of the domain,
we first check whether there exists a concrete tuple $\tilde{x_1}$ such that
$\psi(\tilde{x_0}, \tilde{x_1})$.  If no such tuple exists, the computation stops.
Otherwise, we choose any such tuple $\tilde{x_1}$ and reiterate.
The rule $r$ \emph{loops} if we can find a concrete tuple $\tilde{x_0}$ starting an infinite
computation. If no such tuple exists, $r$ \emph{terminates}.

\section{A Second-Order Formulation of Non-Termination}
\label{sec:sof-nt}
We consider the following second-order formulation of non-termination. Let 
$r:~\tilde{x} \rightarrow \psi(\tilde{x},\tilde{y}), \tilde{y}$ be a rule. 

\begin{definition}[recurrence set \cite{Gupta08}]
  \label{def:recurrence-set}
  We let $\phi_r$ denote the second-order 
  formula
  \[\exists X \left\{ \begin{array}{ll}
      \exists \tilde{x}\ \tilde{x}\in X \land \phantom{x} & (1) \\[1ex]
      \forall \tilde{x} \exists \tilde{y}\ (\tilde{x}\in X
      \Rightarrow [\psi(\tilde{x},\tilde{y}) \land \tilde{y}\in X])\; & (2) 
    \end{array}\right.\]
  A \emph{recurrence set} for $r$ is a set $X$ satisfying $\phi_r$.
\end{definition}
Condition (1) of  Definition~\ref{def:recurrence-set} simply states that the
recurrence set $X$ is not empty. Condition (2) ensures that for any element
$x$ of $X$, there is an element $y$ of $X$ which satisfies the formula
$\psi(\tilde{x},\tilde{y})$ defining the rule $r$.
The existence of a recurrence set is equivalent to non-termination.

\begin{theorem}[\cite{Gupta08}]
  \label{theorem:nonterm-cns}
  $\phi_r$ is true if and only if $r$  loops.
\end{theorem}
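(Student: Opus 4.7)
The plan is to prove the equivalence by establishing both directions separately, using the recurrence set as the bridge between syntactic truth of $\phi_r$ and the semantic notion of an infinite computation.

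For the ``only if'' direction, I will assume $\phi_r$ is true and exhibit an infinite computation. From condition (1), I extract a witness $\tilde{x}_0 \in X$. I then construct the infinite sequence $\tilde{x}_0, \tilde{x}_1, \tilde{x}_2, \ldots$ recursively: given $\tilde{x}_i \in X$, condition (2) instantiated at $\tilde{x}_i$ supplies some $\tilde{x}_{i+1}$ with $\psi(\tilde{x}_i, \tilde{x}_{i+1})$ and $\tilde{x}_{i+1} \in X$, so I pick one such tuple and iterate. By construction, $\psi(\tilde{x}_i, \tilde{x}_{i+1})$ holds at every step, which matches exactly the operational semantics of a non-terminating run starting from $\tilde{x}_0$, so $r$ loops.

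For the ``if'' direction, assume $r$ loops, so there exists a concrete $\tilde{x}_0$ and an infinite computation $\tilde{x}_0, \tilde{x}_1, \tilde{x}_2, \ldots$ with $\psi(\tilde{x}_i, \tilde{x}_{i+1})$ for every $i \geq 0$. I take $X = \{\tilde{x}_i \mid i \geq 0\}$ as the proposed recurrence set. Non-emptiness gives condition (1) with witness $\tilde{x}_0$. For condition (2), given any $\tilde{x} \in X$, there is some index $i$ with $\tilde{x} = \tilde{x}_i$, and $\tilde{x}_{i+1}$ serves as the required witness for $\tilde{y}$ since $\psi(\tilde{x}_i, \tilde{x}_{i+1})$ and $\tilde{x}_{i+1} \in X$.

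The only step that needs a brief comment is the recursive construction in the first direction, which implicitly invokes dependent choice to pick a successor witness at each stage; this is unproblematic here because the domains of interest (e.g.\ $\mathbb{N}$ or $\{0,1\}^*$) are countable, so a canonical enumeration makes the choice constructive. Everything else reduces to unfolding Definition~\ref{def:recurrence-set} and the operational semantics of a rule, so no additional machinery is required.
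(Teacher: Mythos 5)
Your proof is correct and follows essentially the same route as the paper's: the forward direction iterates condition (2) from a witness of condition (1) to build the infinite run, and the backward direction takes $X = \{\tilde{x}_i \mid i \geq 0\}$ from an infinite computation. Your extra remark on dependent choice is a harmless elaboration the paper leaves implicit.
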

\begin{proof} We prove both implications.

\noindent ($\Rightarrow$). As $\phi_r$ is true, we can start by
selecting any arbitrary $\tilde{x_0} \in X$.
We know that there exists $\tilde{y_0} \in X$ s.t. $\psi(\tilde{x_0},\tilde{y_0})$.
By iterating this process, we construct an infinite computation.
Hence $r$ loops.

\noindent ($\Leftarrow$). As  there exists $\tilde{x_0}$ such that $r$ loops, 
let us consider an infinite computation starting at $\tilde{x_0}$: 
$\tilde{x_0}, \tilde{x_1}, \ldots, \tilde{x_n}, \ldots$
Let $X = \{ \tilde{x_i} | i \ge 0 \}$. $X$ is a non-empty set verifying 
$\forall \tilde{x} \exists \tilde{y}\ (\tilde{x}\in X \Rightarrow
[\psi(\tilde{x},\tilde{y}) \land \tilde{y}\in X])$.
Hence $\phi_r$ holds.
\end{proof}

The notion of \emph{closed recurrence set} is introduced
in~\cite{ChenCFNO14}.
\begin{definition}[closed recurrence set]
  \label{def:closed-recurrence-set}
  We let $\phi'_r$ be the second-order 
  formula
  \[\exists X \left\{ \begin{array}{ll}
      \exists \tilde{x}\ \tilde{x}\in X \land \phantom{x} & (1) \\[1ex]
      \forall \tilde{x} \exists \tilde{y}\ (\tilde{x}\in X
      \Rightarrow \psi(\tilde{x},\tilde{y})) \land \phantom{x} & (2) \\[1ex]
     \forall \tilde{x}\forall\tilde{y}\
      ([\tilde{x}\in X \land \psi(\tilde{x},\tilde{y})]
      \Rightarrow \tilde{y}\in X) & (3)
    \end{array}\right.\]
  A \emph{closed recurrence set} for $r$ 
  is a set $X$ satisfying $\phi'_r$.
\end{definition}
Let $S_{\tilde{x}}$ denote the set of successors of $\tilde{x}$ with
$\psi$, \ie the set of $\tilde{y}$ such that $\psi(\tilde{x},\tilde{y})$
holds. Definition~\ref{def:closed-recurrence-set} imposes that
$\emptyset \neq S_{\tilde{x}} \subseteq X$ for any $\tilde{x}$ in $X$.
Hence, we \emph{must} stay in $X$ when applying $\psi$ to $\tilde{x}$.
In contrast, Definition~\ref{def:recurrence-set} requires that
$S_{\tilde{x}} \cap X \neq\emptyset$, \ie we \emph{can} stay in $X$
when applying $\psi$ to $\tilde{x}$. Therefore, closed recurrence sets
are recurrence sets but we also have that recurrence sets always contain
closed recurrence sets.
\begin{theorem}[\cite{ChenCFNO14}]
  \label{theorem:open-iff-closed}
  If there is a recurrence set $X$ for $r$
  then there exists a rule
  $r':~\tilde{x} \rightarrow \psi'(\tilde{x},\tilde{y}), \tilde{y}$
  with $\psi'\Rightarrow \psi$ and $X'\subseteq X$
  such that $X'$ is a closed recurrence set for $r'$.
\end{theorem}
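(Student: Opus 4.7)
The plan is to leave $X$ essentially untouched and instead tighten the relation $\psi$ so that it cannot escape $X$. Specifically, I would take $X' := X$ and define
\[
\psi'(\tilde{x},\tilde{y}) \;:=\; \psi(\tilde{x},\tilde{y}) \,\wedge\, \tilde{y}\in X.
\]
The implication $\psi'\Rightarrow\psi$ is then immediate from the first conjunct, and $X'\subseteq X$ holds trivially.

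Next I would verify the three clauses of $\phi'_{r'}$ for the witness $X'=X$. Clause~(1) follows because $X$ is nonempty by clause~(1) of $\phi_r$. For clause~(2), fix any $\tilde{x}\in X$; clause~(2) of $\phi_r$ supplies a $\tilde{y}\in X$ with $\psi(\tilde{x},\tilde{y})$, and by construction this same $\tilde{y}$ witnesses $\psi'(\tilde{x},\tilde{y})$. For clause~(3), if $\tilde{x}\in X'$ and $\psi'(\tilde{x},\tilde{y})$, then the second conjunct of $\psi'$ directly yields $\tilde{y}\in X=X'$.

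There is no serious obstacle here: the only conceptual point is to notice that the statement places no syntactic restriction on $\psi'$ --- in particular, $\psi'$ may freely refer to $X$ --- so we are allowed to bake the closure condition directly into the relation. Observe that the construction actually delivers $X'=X$ rather than a proper subset; the weaker conclusion $X'\subseteq X$ in the statement merely reflects that one could further shrink $X$ if one wished, which is not needed for the argument.
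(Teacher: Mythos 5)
Your proof is correct: taking $X'=X$ and $\psi'(\tilde{x},\tilde{y})=\psi(\tilde{x},\tilde{y})\land\tilde{y}\in X$ does verify all three clauses of $\phi'_{r'}$ exactly as you argue, and it matches the standard argument (restrict the transition relation so it cannot leave the recurrence set); the paper itself gives no proof of this statement, deferring to \cite{ChenCFNO14}, so there is nothing to compare against line by line. The one point worth flagging is that your $\psi'$ need not belong to the same logic as $\psi$ when $X$ is not definable there; since the theorem only asks for a rule, \ie an arbitrary binary relation, this is not a gap, but it is why the paper's S2S example takes the trouble to exhibit a concrete, definable $\psi'$ rather than invoking your generic construction.
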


The second-order formula $\phi_r$ is a necessary and sufficient
condition for non-termination of at least one of the computations $r$ can generate. Symmetrically,
$\neg \phi_r$ is true if and only if for every value $\tilde{x_0}$, any computation
starting at $\tilde{x_0}$ halts.
As such a problem is in general undecidable (see, e.g., \cite{BradleyMS05}),
it follows that $\phi_r$ is not computable. However, when the second-order
logic is restricted to decidable cases,  we obtain classes of rules for
which the termination/non-termination property is decidable.

\begin{theorem}
  Termination of a \emph{monadic} S1S or S2S rule is decidable.
\end{theorem}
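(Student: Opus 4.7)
The plan is to reduce termination of a monadic S1S (resp.\ S2S) rule to a decision problem for the corresponding monadic second-order theory itself, and then invoke B\"uchi's theorem (resp.\ Rabin's theorem) for decidability.

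First I would invoke Theorem~\ref{theorem:nonterm-cns} to reformulate the question: the rule $r:~x\rightarrow\psi(x,y),y$ terminates iff $\phi_r$ is false, where
\[\phi_r \;=\; \exists X\bigl[\,(\exists x\ x\in X)\;\land\;\forall x\exists y\,(x\in X \Rightarrow \psi(x,y)\land y\in X)\bigr].\]
So it suffices to decide the truth of $\phi_r$.

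Next I would verify that $\phi_r$ is a sentence of the same theory as $\psi$. Since $r$ is a \emph{monadic} rule, $x$ and $y$ are single first-order variables, so $\tilde{x}\in X$ and $\tilde{y}\in X$ become the atomic S1S (resp.\ S2S) formulas $x\in X$ and $y\in X$. The formula $\psi(x,y)$ is, by hypothesis, already an S1S (resp.\ S2S) formula with free first-order variables $x,y$. The construction of $\phi_r$ from $\psi$ then only involves boolean connectives, first-order quantification over $x,y$, and a single monadic second-order existential quantification over $X$, all of which remain inside the language of S1S (resp.\ S2S). Hence $\phi_r$ is itself a closed S1S (resp.\ S2S) formula.

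Finally, I would appeal to the classical decidability results: by B\"uchi's theorem, the truth of S1S sentences is decidable, and by Rabin's theorem, so is the truth of S2S sentences~\cite{Thomas90}. Thus the truth of $\phi_r$ is decidable, which by Theorem~\ref{theorem:nonterm-cns} yields a decision procedure for the non-termination of $r$, and hence for its termination. There is really no hard step here: the only thing to be careful about is the monadicity hypothesis, which is precisely what keeps the second-order quantifier $\exists X$ within the monadic fragment and prevents us from falling outside the decidable theories.
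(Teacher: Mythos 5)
Your proposal is correct and follows essentially the same route as the paper: reduce termination to deciding the truth of $\phi_r$ via Theorem~\ref{theorem:nonterm-cns}, observe that monadicity keeps $\phi_r$ inside S1S (resp.\ S2S), and conclude by the decidability results of B\"uchi and Rabin. The paper's own proof is just a terser version of this; your explicit check that the construction of $\phi_r$ stays within the monadic fragment is a welcome elaboration of the step the paper leaves implicit.
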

\begin{proof}
The monadic second-order logics S1S and S2S are decidable 
\cite{Buchi62S1S, Rabin69S2S}
and so is $\phi_r$ for a monadic S1S or S2S rule $r$. 
If $\phi_r$ is true then $r$ loops else $r$ terminates.
\end{proof}

Weak versions of these logics, where second-order variables range over 
\emph{finite} sets, are also decidable 
and decision procedures have been implemented (see, e.g., MONA \cite{monamanual2001}).
Let $r$ be a monadic S1S or S2S rule.

\begin{corollary}
\label{dec-proc-w-computable-sufficient-conditions}
Decision procedures for WS1S and WS2S provide computable sufficient conditions
for proving non-termination of $r$ in the corresponding structure.
\end{corollary}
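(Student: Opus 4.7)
The plan is to reduce the corollary to a straightforward observation: a finite recurrence set is still a recurrence set, so any model witnessed in the weak logic suffices as a witness in the full logic.

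First, I would set up the translation. Given a monadic S1S (resp.\ S2S) rule $r:~x\rightarrow \psi(x,y),y$, the formula $\phi_r$ from Definition~\ref{def:recurrence-set} is syntactically a monadic second-order formula, with exactly one second-order quantifier $\exists X$. The same syntactic object can be read in WS1S (resp.\ WS2S), the only difference being the interpretation of second-order variables: $X$ now ranges over \emph{finite} subsets of $\mathbb{N}$ (resp.\ of $\{0,1\}^*$). Call this reading $\phi_r^{\textrm{w}}$.

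Next I would invoke the decidability of WS1S and WS2S to obtain a decision procedure (e.g., MONA) that, on input $\phi_r^{\textrm{w}}$, returns \textbf{true} or \textbf{false} in finite time. The key step is then to argue: if $\phi_r^{\textrm{w}}$ holds, then $\phi_r$ holds. This is immediate because any finite set $X$ witnessing $\phi_r^{\textrm{w}}$ is, in particular, a subset of the domain and thus a legitimate value of the second-order variable in the full S1S/S2S semantics; the conditions (1) and (2) of Definition~\ref{def:recurrence-set} are purely first-order in $X$ and their truth does not depend on whether $X$ is viewed as drawn from finite or arbitrary sets. By Theorem~\ref{theorem:nonterm-cns}, truth of $\phi_r$ implies that $r$ loops.

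Combining these, the procedure ``run the WS1S/WS2S decider on $\phi_r$; report non-termination iff the answer is \textbf{true}'' is a computable sufficient condition for non-termination. I would also emphasise what the corollary does \emph{not} claim: the condition is not necessary, since a rule may loop only via infinite recurrence sets, in which case $\phi_r^{\textrm{w}}$ is false even though $\phi_r$ is true. There is no real obstacle here; the only subtlety worth flagging is making clear the one-directional nature of the implication and that the syntactic form of $\phi_r$ is unchanged, only the semantics of $\exists X$ is restricted.
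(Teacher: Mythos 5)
Your proposal is correct and follows essentially the same route as the paper's own (much terser) proof: a finite witness for $\phi_r$ under the weak semantics is still a witness under the full S1S/S2S semantics, so by Theorem~\ref{theorem:nonterm-cns} the rule loops. Your added remarks on the one-directional nature of the implication and on the unchanged syntax of $\phi_r$ are just elaborations of the same argument.
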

\begin{proof}
If such a decision procedure states that $\phi_r$ is true, 
then we know that there exists a non-empty 
finite set $X$ such that $\phi_r$ holds. Hence $r$ loops.
\end{proof}

Note that if the decision procedure states that $\phi_r$ is false, then there is no finite set $X$
satisfying $\phi_r$ but an infinite set $X$ satisfying $\phi_r$ may exist. 
Hence we cannot conclude, except in the following case.

\begin{corollary}
\label{dec-proc-w-dec-term}
When we
know that the set of points which can start a computation  from $r$ is finite,
decision procedures for WS1S and WS2S also decide termination of $r$ 
in the corresponding structure.
\end{corollary}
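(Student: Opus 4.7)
The plan is to derive this corollary from Corollary~\ref{dec-proc-w-computable-sufficient-conditions} by closing the remaining gap: under the extra hypothesis, a negative answer from the WS1S/WS2S procedure on $\phi_r$ must also yield a definite conclusion. Concretely, I intend to show that if the set of possible starting points of $r$ is finite, then every recurrence set for $r$ is itself finite, so the weak theory already sees all witnesses of $\phi_r$ that the full theory sees.

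First I would dispatch the easy direction: if the decision procedure reports $\phi_r$ true, apply Corollary~\ref{dec-proc-w-computable-sufficient-conditions} directly to conclude that $r$ loops. For the other direction, suppose the procedure reports $\phi_r$ false, and assume for contradiction that $r$ loops. Then by Theorem~\ref{theorem:nonterm-cns} there is a recurrence set $X$ for $r$, which, as in the proof of that theorem, can be taken to be $\{\tilde{x_i} \mid i \ge 0\}$ for some infinite computation $\tilde{x_0}, \tilde{x_1}, \ldots$. Each $\tilde{x_i}$ is itself the head of an infinite computation (its own suffix), hence lies in the finite set of points that can start a computation from $r$. Consequently $X$ is finite, which witnesses $\phi_r$ already in WS1S (respectively WS2S), contradicting the negative answer. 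Therefore $r$ terminates, and the procedure decides termination in the corresponding structure.

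I do not expect any real obstacle: the argument is a short reduction to Theorem~\ref{theorem:nonterm-cns} and Corollary~\ref{dec-proc-w-computable-sufficient-conditions}. The one point worth being careful about is the reading of ``points which can start a computation''; my plan uses the interpretation under which every element of an infinite orbit is itself such a point, which is exactly what makes the inclusion of $X$ in the finite hypothesised set immediate and hence makes the weak theory suffice.
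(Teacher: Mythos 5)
Your proof is correct and follows essentially the same route as the paper: a positive answer is handled by Corollary~\ref{dec-proc-w-computable-sufficient-conditions}, and a negative answer is upgraded to termination because the finiteness hypothesis forces any recurrence set to be finite. In fact you spell out the one step the paper leaves implicit (why a recurrence set must sit inside the finite set of starting points, via the orbit-based witness from Theorem~\ref{theorem:nonterm-cns}), so your argument is, if anything, slightly more detailed than the published one.
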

\begin{proof}
If a decision procedure states that $\phi_r$ is true, 
then by Corollary \ref{dec-proc-w-computable-sufficient-conditions} $r$ loops. 
Else it states that $\phi_r$ is false. So there does not exist a
finite set $X$ satisfying $\phi_r$. As $X$ cannot be infinite by hypothesis, 
it means that there does not exist a set $X$ such that $\phi_r$ holds. Hence $r$ terminates.
\end{proof}

Note that the condition of Corollary \ref{dec-proc-w-dec-term} can be decided in WS1S as it can be stated as
$\exists m \ \forall x \ (x > m \Rightarrow  \neg \ \exists y \ \psi(x,y))$. However Example 
\ref{corollary-does-not-decide-term} shows that it does not decide termination.

\section{Examples}
\label{sec:examples}

\begin{example}[S1S]
  Consider $r:~x \rightarrow \psi(x,y), y$ where
  \[\psi(x,y) =  (3 < x \land x < 10 \land y < x) \lor (x < 3 \land y = x+1) \]
  The set  of points which can start a computation  from $r$ is finite: $\{x \in \mathbb{N} | x \neq 3 \land x < 10\}$.
  MONA tells us that $\phi_r$ is false. By Corollary~\ref{dec-proc-w-dec-term}, $r$ terminates.
  \qed
\end{example}

\begin{example}[S1S]
  Consider $r:~x \rightarrow \psi(x,y), y$ where
  \[\psi(x,y) = (3 < x  \land y < x) \lor (x < 4  \land y = x+1) \]
  MONA reports that $\phi_r$ is true, with a computed satisfying $X=\{3, 4\}$.
  Indeed for any $x \in X=\{3, 4\}$, 
  there is a $y$ in $X$ such that $\psi(x,y)$ holds: if $x=3$, take $y=4$ and if $x=4$, $y=3$.
  Note that the set $X$ is not unique, as $\phi_r$ is also true for, e.g., $X=\{2, 3, 4, 2014\}$.
  By Corollary~\ref{dec-proc-w-computable-sufficient-conditions}, $r$ loops. 
  \qed
\end{example}

\begin{example}[S1S]
\label{corollary-does-not-decide-term}
  Consider $r:~x \rightarrow \psi(x,y), y$ where
  \[\psi(x,y) =  (x <  y)\]
  Although MONA tells us that there is no finite $X$ satisfying $\phi_r$, 
  as the set  of points which can start a computation is infinite, we cannot apply
  Corollary~\ref{dec-proc-w-dec-term}.
  Indeed, taking $X=\mathbb{N}$ shows that $\phi_r$ is true. 
  Hence by Theorem~\ref{theorem:nonterm-cns}, $r$ loops.
  Note that any decision procedure for S1S will prove that $\phi_r$ is true.\qed
\end{example}

\begin{example}[S1S]
  Consider $r:~x \rightarrow \psi(x,y), y$ where
  \[\psi(x,y) =  (y <  x)\]
  MONA reports that there is no finite $X$ satisfying $\phi_r$.
  As the set  of points which can start a computation is infinite, we cannot apply
  Corollary~\ref{dec-proc-w-dec-term}.
  Assume that $\phi_r$ is true. So there is a non-empty $X \subseteq \mathbb{N}$
  satisfying $\phi_r$. Let $e$ be its least element. Condition (2) of Definition~\ref{def:recurrence-set} 
  states that there exists $d$ in $X$ such that $d < e$, which contradicts that $e$ is the least element of $X$.
  Hence $\phi_r$ is false, as should be shown by any decision procedure for S1S. 
  By Theorem~\ref{theorem:nonterm-cns}, $r$ terminates.\qed
\end{example}

\begin{example}[S1S]
  Consider $r:~x \rightarrow \psi(x,y), y$ where
    \[\psi(x,y) =  (\forall X \ (x \in X \land \psi'(X)) \Rightarrow y \in X)\]
  with
  \[\psi'(X) = (\forall z\ z \in X \Rightarrow z + 1 \in X)\]
  We have $\psi'(X)$ is true if and only if $X$ is closed by
  application of the successor function $+1$. So,
  $\psi(x,y)$ is true if and only if $x\leq y$.
  MONA reports that $\phi_{r}$ is true, with a computed satisfying $X=\{0\}$.
  By Corollary~\ref{dec-proc-w-computable-sufficient-conditions}, $r$ loops.
  \qed
\end{example}

\begin{example}[S2S]
  Consider $r:~x \rightarrow \psi(x,y), y$ where
  \[\psi(x,y) = (y = x.1 \lor x = y.1)\]
  The set $X=\{\varepsilon, 1\}$ is not empty and for any $x$
  in $X$ there is a $y$ in $X$ such that $\psi(x,y)$ holds.
  So $\phi_r$ is true (also shown by MONA).
  By Corollary~\ref{dec-proc-w-computable-sufficient-conditions},
  $r$ loops.
  
  Note that $X$ is a recurrence set for $r$ which is
  not closed. Indeed, $1\in X$ and $\psi(1,1^2)$ holds with
  $1^2\not\in X$. Hence, condition (3) of
  Definition~\ref{def:closed-recurrence-set} does not hold.
  By Theorem~\ref{theorem:open-iff-closed}, there must exist 
  $r':~x \rightarrow \psi'(x,y), y$ with
  $\psi'\Rightarrow \psi$ and $X'\subseteq X$ such that $X'$
  is a closed recurrence set for $r'$. For 
  \[\psi'(x,y) = (x\neq 1 \land y = x.1) \lor
  x = y.1  \quad\text{and}\quad X'=X\]
  we have that $\psi'\Rightarrow \psi$,
  $X'\subseteq X$ and $X'$ is a closed recurrence
  set for $r'$.
  \qed
\end{example}

\begin{example}[S2S]
  Consider $r:~x \rightarrow \psi(x,y), y$ where
  \begin{eqnarray*}
    \lefteqn{\psi(x,y) =}\\
    & & (\exists z\ x < 0^4 \land x = z.0 \land y = z.1) \lor \\
    & & (\exists z\  z.01 \leq x \land y = z.11) \lor \\
    & & (x = 1^2 \land y = 0^3)
  \end{eqnarray*}
  The set $X = \{1^2, 0^3, 0^21, 01^2\}$ is not empty and for
  any $x$ in $X$ there is a $y$ in $X$ such that $\psi(x,y)$
  holds. Hence $\phi_r$ is true (also shown by MONA), so $r$ loops. 
  Note that $X$ is a closed recurrence set for $r$. \qed
\end{example}

\begin{example}[S2S]
  Consider $r:~x \rightarrow \psi(x,y), y$ where
  \begin{eqnarray*}
    \lefteqn{\psi(x,y) =}\\
    & & (\exists z\ x = z.0 \land y = z.1) \lor \\
    & & (\exists z\ x = z.1 \land y = z.10)
  \end{eqnarray*}
  The infinite set
  $X = \{0, 1, 10, 11, 110, 111, \ldots\} = 1^*(0+1)$
  is not empty and for any $x$ in $X$ there is a $y$ in $X$
  such that $\psi(x,y)$ holds. Hence $\phi_r$ is true, 
  as should be shown by any decision procedure for S2S.  
  So $r$ loops.
  Note that $X$ is a closed recurrence set  for $r$. \qed
\end{example}

\begin{example}[S2S]
  Consider $r:~x \rightarrow \psi(x,y), y$ where
  \[\psi(x,y) =  (\forall X \ (x \in X \land \psi'(X)) \Rightarrow y \in X)\]
  with
  \[\psi'(X) = (\forall z\ z \in X \Rightarrow (z.0 \in X \land z.1 \in X))\]
  We have $\psi'(X)$ is true if and only if $X$ is closed by
  application of the successor functions $.0$ and $.1$. So,
  $\psi(x,y)$ is true if and only if $x\leq y$.
  MONA reports that $\phi_r$ is true, with a computed satisfying $X=\{\varepsilon\}$.
  By Corollary~\ref{dec-proc-w-computable-sufficient-conditions}, $r$ loops.
  \qed
\end{example}

\section{Related Works and Conclusion}
\label{sec:related-works-conclusion}

Recurrence sets were first introduced in~\cite{Gupta08} where $\psi$ denotes
any binary relation. Two symbolic analyses are presented in this paper
for constructing such sets: a bitwise analysis, which assumes that
the state space is finite and encoded using Boolean variables, and a
linear arithmetic analysis, which assumes that the program transitions
can be represented as rational linear constraints. In contrast to our
work, no second-order formulation is considered in this paper.

Let us now focus on termination-decidable classes of rules.
In \cite{DeSchreye89a}, the authors present a decision procedure for an 
arbitrary rule $\tilde{x} \rightarrow \psi(\tilde{x},\tilde{y}), \tilde{y}$
where $\psi(\tilde{x},\tilde{y})$ is a conjunction of equality constraints over rational trees.
In \cite{Lee01a, CodishLS05, BenAmram11a}, one finds variations of a decision procedure
for finite sets of rules $\tilde{x} \rightarrow \psi(\tilde{x},\tilde{y}), \tilde{y}$ 
where $\psi(\tilde{x},\tilde{y})$ is a conjunction of constraints $x > y$ or $x \geq y$ 
over a well-founded domain (such as the natural numbers) or the integers.
Generalizing \cite{Tiwari04}, termination of an arbitrary deterministic linear loop is
shown decidable  in \cite{Braverman06}  over the integers, the rationals, and the reals.
To the best of our knowledge, termination 
of a non-deterministic linear loop remains an open problem.

Summarizing the paper, we have seen that second-order logic is a convenient
language to express non-termination as a necessary and sufficient condition.
Such a condition is in general undecidable.
By restricting the language to the decidable cases S1S and S2S,
we have defined two new classes of rules, the termination of which is decidable.
Finally, we have shown that the weak versions of these logics provide sufficient conditions
for termination and non-termination of such rules.


\bibliographystyle{plain}

\end{document}